\documentclass{article}



     \usepackage[preprint]{neurips_2019}



\usepackage[utf8]{inputenc} 
\usepackage[T1]{fontenc}    
\usepackage{hyperref}       
\usepackage{url}            
\usepackage{booktabs}       
\usepackage{amsfonts}       
\usepackage{nicefrac}       
\usepackage{microtype}      

\usepackage{graphicx}
\usepackage{lscape}
\usepackage{setspace}
\usepackage{subcaption}
\usepackage{color}
\usepackage{sgame}
\usepackage{wrapfig}
\usepackage{sgamevar}
\usepackage{amsmath,amssymb,amsthm,cancel,staves}
\usepackage{mathtools}
\usepackage{enumitem}
\usepackage{titlesec}
\usepackage{tikz}
\usepackage{breakcites}
\usepackage{indentfirst}
\usetikzlibrary{arrows}
\usepackage{accents}

\DeclareMathOperator*{\argmin}{arg\,min}
\newcommand{\ATT}{\textrm{ATT}}

\newcommand{\ATUT}{\textrm{ATUT}}

\newcommand{\Tavg}{\frac{1}{N_1} \sum_{i:T_i=1}}
\newcommand{\Cavg}{\frac{1}{N_0} \sum_{i:T_i=0}}

\newtheorem{theorem}{Theorem}[section]

\newtheorem{assumption}{Assumption}

\theoremstyle{definition}

\numberwithin{remark}{section}
\numberwithin{equation}{section}
\numberwithin{theorem}{section}

\renewcommand{\(}{\left(}
\renewcommand{\)}{\right)}
\renewcommand{\hat}{\widehat}

\renewcommand{\hat}{\widehat}
\renewcommand{\leq}{\leqslant}
\renewcommand{\geq}{\geqslant}

\DeclareMathOperator*{\argminA}{arg\,min}


\renewcommand{\(}{\left(}
\renewcommand{\)}{\right)}
\renewcommand{\[}{\left[}
\renewcommand{\]}{\right]}

\vfuzz2pt 
\hfuzz2pt 

\title{Matching on What Matters:\\
	{\Large A Pseudo-Metric Learning Approach to Matching Estimation in High Dimensions} 
	}

%

\author{%
  Gentry Johnson \\
  Department of Economics\\
  University of Maryland\\
  College Park, MD 20742 \\
  \texttt{johnsong@econ.umd.edu} \\
  \And
  Brian Quistorff \\
  AI + Research \\
  Microsoft \\
  Redmond, WA 98052 \\
  \texttt{Brian.Quistorff@microsoft.com} \\
  \AND
  Matt Goldman \\
  AI + Research \\
  Microsoft \\
  Redmond, WA 98052 \\
  \texttt{mattgoldman5850@gmail.com} \\
}

\begin{document}

\maketitle

\begin{abstract} 
When pre-processing observational data via matching, we seek to approximate each unit with maximally similar peers that had an alternative treatment status--essentially replicating a randomized block design. However, as one considers a growing number of continuous features, a curse of dimensionality applies making asymptotically valid inference impossible \citep{abadie2006large}. The alternative of ignoring plausibly relevant features is certainly no better, and the resulting trade-off substantially limits the application of matching methods to ``wide'' datasets. Instead, \cite{li2017matching} recasts the problem of matching in a metric learning framework that maps features to a low-dimensional space that facilitates ``closer matches'' while still capturing important aspects of unit-level heterogeneity. However, that method lacks key theoretical guarantees and can produce inconsistent estimates in cases of heterogeneous treatment effects. Motivated by straightforward extension of existing results in the matching literature, we present alternative techniques that learn latent matching features through either MLPs or through siamese neural networks trained on a carefully selected loss function. We benchmark the resulting alternative methods in simulations as well as against two experimental data sets--including the canonical NSW worker training program data set--and find superior performance of the neural-net-based methods.
\end{abstract}

\section{Introduction}
\label{sec:intro}
While experimentation approaches to causal inference have strong theoretical guarantees, they are impractical in some settings. This leads researchers to rely on observational methods which suppose that conditioning on available covariates is sufficient to yield unbiased estimates. This can be especially tenuous (or prone to researcher manipulation) when the conditioning variables are selected in an \textit{ad hoc} way. Recognizing this limitation, there has been growing interest in importing machine learning (ML) tools to automate and accelerate the model selection process \citep{athey2017beyond}. Most examples of this work \citep{chernozhukov2016double,wager2018estimation} rely on ``black-box'' ML algorithms that may obscure the underlying counterfactual reasoning. By contrast, matching methods yield a set of direct unit-to-unit comparisons that are fully transparent to policy makers, but are much less robust to the curse of dimensionality presented by high-dimensional data. Nonetheless, there has been relatively little work on the problem of model selection in high-dimensions for these methods.

Instead, propensity score matching (PSM) remains the most commonly used matching method used in applied economics and related fields. This approach side-steps the curse of dimensionality by collapsing all covariates to a single number reflecting probability of treatment \citep{McCaffrey2004,Wyss2014}. However, PSM has been criticized because it does not guarantee covariate balance in the resulting sample, thus allowing significant model dependence and inefficiency in downstream estimations \citep{King2016}. Prognostic score matching \citep{Hansen2008} also has the advantage of collapsing the covariates used for matching to a single dimension. While it does not explicitly balance covariates, it does balance on \textit{counterfactual outcomes}, eliminating the potentially large efficiency losses associated with PSM.

Matching directly on the space of covariates--using methods like Mahalanobos Distance Matching (MDM) or Coarsened Exact Matching (CEM)--has gained popularity in fields such as Statistics, Epidemiology, Sociology, and Political Science \citep{ho2007matching}. This has the advantage of not requiring correct specification of either a propensity or prognostic model. Additionally, these methods provide better covariate balance and, to the extent those covariates drive variance in outcomes,  improve efficiency of downstream estimations when compared to PSM. Finally, these methods prevent the creation of \textit{unintuitive matches} that happen when units with very different covariate values have similar estimated propensity (or prognostic) scores.
However, MDM and CEM are particularly vulnerable to the curse of dimensionality discussed in \cite{abadie2006large} and demonstrated in \cite{Gu1993}. That is, even for the case of two continuous covariates, the non-parametric bias resulting from covariate differences between matched pairs can be large enough to make asymptotically valid inference impossible.
Along with \cite{li2017matching}, we explore a middle ground to recast the problem of matching in a metric learning framework that maps our covariates to a low-dimensional space that facilitates ``closer matches'' while still capturing important aspects of unit-level heterogeneity. These methods are a natural choice as they were initially designed to predict similarity in high-dimensional problems like facial recognition \citep{parkhi2015deep}, person re-identification \citep{liao2015person}, or image retrieval \citep{hoi2010semi}.

In order to proceed, we first establish very high-level conditions on the behavior of a learned metric in order to satisfy the consistency of matching estimators. Then, we utilize feed-forward neural networks (NN) and siamese neural networks (SNN) trained (separately) on outcomes and treatment labels and extract the semi-final layer of these networks as low-dimensional representations that are, nonetheless, sufficient to control for the confounding effects of our covariates. The dimension of this embedding layer can be easily altered to trade off a richer representation for concerns about non-parametric bias. Through simulation and comparison to two canonical experimental datasets from the applied social sciences, we show our methods perform better than available competitors.



%

\section{Theoretical framework}
\label{sec:econometric_framework}

Following the potential outcome framework, we let $T \in \{0,1\}$ denote a unit's treatment status and $Y(1), \ Y(0)$ denote the outcomes that would be realized for each unit if it were treated (or not). Additionally, we adopt the following conventional assumptions.

\begin{assumption}(compact support)
	\label{ass:1}
	Let $X$ be a random vector of continuous covariates with dimension $k$ and with density bounded away from zero on some compact and convex support $\mathbb{X}$.
\end{assumption}

\begin{assumption}
	\label{ass:2}
	For some fixed $\eta>0$ and $\forall x \in \mathbb{X}:$
	\begin{itemize}
		\item (unconfoundedness) $T \perp (Y(0), Y(1)) | X$
		\item (common support) $\rho(x) = P(T=1|X) \in (\eta,1-\eta)$
		\item (continuity) Denote
		\begin{align*}
		m(x) &:=E[Y(0)|X=x]\\
		\tau(x) &:= E[Y(1)|X=x] - E[Y(0)|X=x],
		\end{align*}
		then $\rho$, $m$, $\tau$ are all continuous functions of $x$.
	\end{itemize}
\end{assumption}

\begin{assumption}(iid data)
	\label{ass:3}
	$\{Y_i,T_i,X_i\}$ are independent and identically distributed draws from some joint distribution of $(Y,T,X)$,
\end{assumption}
Implicit in Assumption~\ref{ass:3} is the Stable Unit Treatment Value Assumption (SUTVA) that rules out any contaminating effects of one unit's treatment onto another unit's outcomes.

The average treatment effect on the treated (ATT) or untreated (ATUT) populations are then estimated under nearest neighbor matching (NNM) as
\begin{align}
\label{eqn:att_def}
\hat \ATT_d &= \Tavg \left[Y_i - Y_{j^*_d(i)}\right] \\
\notag
\hat \ATUT_d &= \Cavg \left[Y_i - Y_{j^*_d(i)}\right],
\end{align}
where $j^*_d(i) = \argmin_{k:T_i \neq T_k} d(X_i,X_k)$ gives the nearest neighbor with opposing treatment status. We write the $d$ subscript to emphasize the estimator's dependence on some (potentially learned) distance metric in our setting. Now we establish high-level conditions that $d$ must obey in order to satisfy consistency of the resulting estimator.

\begin{theorem} [Consistency]
	\label{thm:consistency}
	Suppose a metric $d$ on the space of $\mathbb X$. Additionally,
	\begin{enumerate}[label=\alph*.]
		\item suppose $\exists \ C >0$ such that for every $x,y \in \mathbb{X}$ \textit{either}
		\begin{itemize}
			\item $PSM$: $d(x,y) \geq C \cdot |\rho(x) - \rho(y)|$
			\item $PGM_C$: $d(x,y) \geq C \cdot |m(x) - m(y)|$
		\end{itemize}
		then $\hat{\ATT}_{d}$ is an asymptotically unbiased estimator of $\ATT$.
		\item suppose $\exists \ C >0$ such that for every $x,y \in \mathbb{X}$ \textit{either}
		\begin{itemize}
			\item $PSM$: $d(x,y) \geq C \cdot |\rho(x) - \rho(y)|$
			\item $PGM_T$: $d(x,y) \geq C \cdot |(m(x) + \tau(x)) - (m(y) + \tau(y))|$
		\end{itemize}
		then $\hat{\ATUT}_{d}$ is an asymptotically unbiased estimator estimator of $\ATUT$.
	\end{enumerate}
And consistency of either estimator would follow from application of a law of large numbers to the residuals.
\end{theorem}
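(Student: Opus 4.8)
The plan is to reduce asymptotic unbiasedness to a statement about a single conditioning-bias term and then to treat that term separately under each of the two sufficient conditions. First I would condition on the covariates and treatment labels $\{X_i,T_i\}_{i=1}^N$, observing that the match $j^*_d(i)=\argmin_{k:T_k\neq T_i} d(X_i,X_k)$ is a function of $\{X_i,T_i\}$ alone and never inspects the outcomes. For the treated-unit estimator, unconfoundedness gives $E[Y_i\mid X_i,T_i=1]=m(X_i)+\tau(X_i)$ and $E[Y_{j^*_d(i)}\mid X_{j^*_d(i)},T_{j^*_d(i)}=0]=m(X_{j^*_d(i)})$, so that
\begin{align*}
E\big[\hat{\ATT}_d \,\big|\, \{X_i,T_i\}\big] = \Tavg \tau(X_i) + \Tavg\big[m(X_i)-m(X_{j^*_d(i)})\big].
\end{align*}
The first average has unconditional expectation exactly $\ATT=E[\tau(X)\mid T=1]$ by exchangeability, and because the match is outcome-independent the residuals $Y_i-E[Y_i\mid X_i,T_i]$ contribute zero in expectation; hence asymptotic unbiasedness is equivalent to showing that the matching-discrepancy term $B_N:=\Tavg[m(X_i)-m(X_{j^*_d(i)})]$ satisfies $E[B_N]\to 0$. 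The $\ATUT$ case is identical after swapping the two arms: the relevant counterfactual for an untreated unit is $Y(1)$, whose conditional mean is $m+\tau$, which is exactly why the prognostic condition there is stated on $m+\tau$ rather than on $m$.

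Under the prognostic condition $PGM_C$ the argument is direct. The lower bound $d(x,y)\geq C\,|m(x)-m(y)|$ yields the Lipschitz-type control $|m(X_i)-m(X_{j^*_d(i)})|\leq C^{-1} d(X_i,X_{j^*_d(i)}) = C^{-1}\min_{k:T_k=0} d(X_i,X_k)$, so $|B_N|$ is bounded by the average nearest-opposite-neighbor distance in the $d$-metric. I would then show this average vanishes: by Assumptions~\ref{ass:1} and~\ref{ass:2}, $X$ has density bounded away from zero on the compact set $\mathbb{X}$ and the overlap condition $\rho\in(\eta,1-\eta)$ forces controls to have positive density everywhere, so for each treated unit the Euclidean distance to its nearest control tends to zero almost surely; since $d$ is a (continuous) metric on the compact $\mathbb{X}$ it is bounded and vanishes along Euclidean-convergent sequences, whence the $d$-matching distance also tends to zero. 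Boundedness of $d$ on $\mathbb{X}\times\mathbb{X}$ then licenses dominated convergence to conclude $E[B_N]\to 0$.

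Under the propensity condition $PSM$ the same matching distance only controls $|\rho(X_i)-\rho(X_{j^*_d(i)})|\leq C^{-1}d(X_i,X_{j^*_d(i)})\to 0$, and the pointwise gap $m(X_i)-m(X_{j^*_d(i)})$ need \emph{not} vanish, so a balancing-score argument replaces the Lipschitz one. Writing $\mu(e):=E[m(X)\mid \rho(X)=e]$, which is continuous on the compact image of $\rho$ by continuity of $m$ and $\rho$, I would decompose each summand as
\begin{align*}
m(X_i)-m(X_{j^*_d(i)}) = \big[m(X_i)-\mu(\rho(X_i))\big] + \big[\mu(\rho(X_i))-\mu(\rho(X_{j^*_d(i)}))\big] + \big[\mu(\rho(X_{j^*_d(i)}))-m(X_{j^*_d(i)})\big].
\end{align*}
The middle term vanishes by uniform continuity of $\mu$ together with $|\rho(X_i)-\rho(X_{j^*_d(i)})|\to 0$. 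The two ``within-stratum'' deviations are handled by the fact that the propensity score is a balancing score, $T\perp X\mid \rho(X)$: conditional on $\rho(X_i)$ the treated-arm mean of $m(X_i)$ equals $\mu(\rho(X_i))$, and conditional on $\rho(X_{j^*_d(i)})$ the control-arm mean of $m(X_{j^*_d(i)})$ equals $\mu(\rho(X_{j^*_d(i)}))$, so both deviations have conditional mean zero and average away.

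The main obstacle lives in two places. The first is the innocuous-looking claim that the $d$-matching distance vanishes: for a generic learned pseudo-metric this requires $d$ to be compatible with the covariate topology (so that Euclidean-dense controls are also $d$-dense), which is automatic for the continuous embeddings produced by our networks but must be assumed for an arbitrary $d$. The second, and genuinely delicate, point is the third bracketed term in the $PSM$ decomposition: the matched control $X_{j^*_d(i)}$ is \emph{selected} through $d$, so one must argue that this selection becomes asymptotically a function of the propensity score alone and is therefore uncorrelated with the within-stratum residual $m(X_{j^*_d(i)})-\mu(\rho(X_{j^*_d(i)}))$. This, together with the repeated-use bookkeeping for matched controls, is precisely the ``application of a law of large numbers to the residuals'' that upgrades asymptotic unbiasedness of $\hat{\ATT}_d$ and $\hat{\ATUT}_d$ to full consistency.
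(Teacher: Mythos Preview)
Your argument follows essentially the same route as the paper's: reduce $E[\hat{\ATT}_d-\ATT]$ to the matching-discrepancy term $\Tavg[m(X_i)-m(X_{j^*_d(i)})]$, use positivity of the treated and control densities on $\mathbb X$ to force the $d$-matching distance to vanish, and then invoke the $PGM_C$ (resp.\ $PSM$) inequality to translate that into control of $|m(X_i)-m(X_{j^*_d(i)})|$ (resp.\ $|\rho(X_i)-\rho(X_{j^*_d(i)})|$); the $\ATUT$ case swaps $m$ for $m+\tau$ in exactly the way you describe.

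The one place you go further than the paper is the $PSM$ branch: the paper simply writes that the bias ``is controlled by standard arguments about the estimated propensity score'' and cites \cite{abadie2016matching}, whereas you supply an explicit balancing-score decomposition via $\mu(e)=E[m(X)\mid\rho(X)=e]$. That decomposition is correct and is effectively what the cited reference does, and you are right to flag the selection issue in the third bracket---the paper's proof does not address it either and absorbs it into the citation. Likewise, your remark that vanishing of the $d$-matching distance tacitly requires $d$ to be compatible with the covariate topology is a fair caveat that the paper's proof leaves implicit.
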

\begin{proof}
	See proof in Supplement D
\end{proof}

Theorem~\ref{thm:consistency} guarantees consistency for metrics that suitably penalize either deviations in propensity score or a specific prognostic score (depending on the estimand). It is immediate to see that it can be applied to the already well-established results for MDM matching as well as propensity score and prognostic score matching for the cases when those models are correctly specified. Furthermore, it demonstrates the consistency of methods which (given guarantees on convergence) learn a matching metric that utilizes outcomes as labels, provided that only outcomes for control units are used to learn a metric for estimating the ATT, or conversely, that only outcomes for treated units are used to learn a metric for estimating the ATUT.\footnote{However be combined with additional results about convergence of the underlying metric learning methods in order to insure consistency in practice.} This closely parallels the results in \cite{Antonelli2016} that require two separate prognostic scores for average treatment effect estimation.

Critically, it does not support the consistency of matching estimators that utilize metrics learned on the pooled samples of outcome labels, including the procedure detailed in \cite{li2017matching}. To demonstrate the potential failure of consistency in such settings, we pose a simple example with just a single covariate ($x\sim U[-1,1]$) and outcomes given by
\begin{align}
\label{eqn:example}
E[Y(0)|x] &= \frac{x}{2} \qquad E[Y(1)|x] = 2 \qquad  \tau(x) =  E[Y(1)|x] -  E[Y(0)|x] = \frac{4-x}{2}\\
\intertext{and treatment assigned with propensity scores}
\notag
\rho(x) &= 
\left\{
\begin{array}{ll}
\frac{x}{4-x}  & \mbox{if } x \geq 0, \\
\frac{-3x}{4-x} & \mbox{if } x < 0.
\end{array}
\right.
\end{align}
A metric learned to minimize mean squared error on the pooled (treatment and control) outcome data would converge to
\begin{align*}
d_0(x_1,x_2) &= (E[Y|x_1] - E[Y|x_2])^2=(|x_1| - |x_2|)^2.
\end{align*}
That is, for a treated unit with $x_i=\frac{1}{2}$ we would be indifferent between matching to a control unit with $x=\frac{1}{2}$, implying $\hat{\tau}_i=\frac{1}{2}$, and matching to a control unit with $x=-\frac{1}{2}$, implying $\hat{\tau}_i=\frac{3}{2}$. It is immediately evident that such an estimator would be inconsistent for both the ATT and ATUT parameters. The example in \eqref{eqn:example} is highly stylized, but as long as one supposes heterogeneous treatment effects, it will be straightforward to construct counter-examples to the consistency of matching metrics learned on pooled outcome data.

In addition to using a control-only or treatment-only subsample of training units, an alternative but somewhat more parametric attempt to mitigate potential bias resulting from heterogenous treatment effects is detailed in \cite{JohanssonEtAl_icml16}, wherein treatment status is concatenated with the learned data representation only at the stage immediately before prediction.

\section{Proposed Matching Framework and Methods} \label{sec:estimators}

In this section, we provide a description of a general procedure to pre-process data for matching estimation. This procedure is then combined with NNM to produce a full matching method. We describe in additional detail how this procedure can be applied to the pseudo-metric learning paradigm specifically.

\subsection{Approach Overview}

Our general framework, which is conformable to a variety of pre-processing techniques, involves training a predictive algorithm to predict an outcome $Y$, as well as training a separate instance of the algorithm to predict treatment status $D$. We then combine the learned representation of the data from these two tasks--where the method of combination depends on the predictive algorithm used--and carry out nearest-neighbor matching on the resulting object. Computationally, the complexity of the matching process is dominated by the component ML models. As these are standard model types their complexity is well established in the literature.

\subsection{Pseudo-metric learning methods}

The following methods employ different neural net architectures in order to discover some low-dimensional representation of the original covariates on which a standard distance metric (e.g. Euclidean) can be applied to, together, constitute a learned pseudo-metric. In analogy to the alternatives of propensity and prognostic score matching, we learn two separate low-dimensional representations of the data, one in which units with similar potential outcomes are nearby each other,\footnote{When training the neural nets which learn similarity in outcomes, we use control-only or treatment-only subsamples. This decision is motivated by the previous discussion in Section \ref{sec:econometric_framework} regarding heterogeneous treatment effects.} and one in which units with similar propensity for treatment are nearby each other. 

For a given unit $i$, these separate representations are characterized by two vectors of continuous-valued features, $\mathbf{m_{i,y}}$ and $\mathbf{m_{i,d}}$. We propose performing matching on the space defined by the union of these two representations, $M_{union} = [M_{y} \ \  M_{d}]$, where $dim(M_{d}) = n \times z_d$ and $dim(M_{y}) = n \times z_y$. It is possible that $z_d = z_y$ if the researcher so chooses, but this need not be true. Note that unlike methods which aim to select some subset of the original covariates, the combination of the learned features discussed here will include all learned features resulting from estimating the separate treatment and outcome models. Thus, the researcher, through specifying the dimension of the hidden layers to be extracted and used in matching, can explicitly control the dimension of the matching space. As a practical matter, we drop learned features which are perfectly correlated with another learned feature and learned features with near-zero variance. We recommend standard hyperparameter tuning for both the number of hidden layers as well as their size.

\subsubsection{NN method}

NNs are, theoretically, universal function approximators \citep{Cybenko1989}. The success of NNs derives from the capacity of a network's inner layers to learn successive transformations of the data. Extracting these inner layers and using them as input into a second model is a known technique for dimensionality reduction \citep{Hinton504}.

Specifically, we will refer to the final hidden layers of the network that predict treatment and outcome as $M_{d}$ and $M_{y}$, respectively. These are of dimension $z_d$ and $z_y$ respectively--parameters that the researcher can control when designing the network architecture that will determine the dimension of the matching space. Additionally, denote the vectors of final hidden layer weights as $\sigma_{l}^{d}$ and $\sigma_{l}^y$, respectively, and let $e_{n} = [1 \ 1 \ 1 \ . . . \ 1]$ be a $(1 \times n)$ vectors of ones. Then, letting $\otimes$ be the Kronecker product operator, define $A$, an $(n \times z_d)$ matrix, and $B$, an $(n \times z_y)$ matrix as

\begin{equation}
A = e' \otimes \sigma_{l}^{d}, \ \ B= e' \otimes \sigma_{l}^{y}
\end{equation}

Using the above definitions, and letting the operator $\odot$ refer to element-wise matrix multiplication, we propose matching on

\begin{equation}
X_{s} = [M_{d} \odot A \ \ M_{y} \odot B]_{n \times (z_y + z_d)}
\end{equation}

The matrix $X_{s}$ can be thought of simply as the matrix of scaled learned features. If a feature in the final hidden layer receives a very large weight as it enters the output layer, it is desirable to also increase the scale of that feature in the downstream matching estimation. Similarly, if a feature in the final hidden layer receives a near-zero weight as it is passed to the output layer, we would like to assign less importance to matching closely on that feature than on others.

\subsubsection{SNN method}

The siamese neural network structure is not designed to predict a value, but rather to learn a mapping to a low-dimensional manifold such that alike observations are nearby each other and dissimilar observations are far from each other on the manifold. Each pass through the network involves two observations, $X_{i}$ and $X_{j}$. Each observation is passed through identical networks, terminating in a layer whose length the researcher controls. 

The distance between the two layers is then input into one of two loss functions. The SNN which is designed to learn similarity in outcomes uses the loss function described in Equation (2.2). The SNN which is designed to learn similarity in treatment propensity uses a standard contrastive loss function as in Equation (2.3).

We recommend extracting the final hidden layer of each SNN, and matching on the combination of these layers. Denote the layer extracted from the treatment-target SNN as $M_{d}$ and the layer extracted from the outcome-target SNN as $M_{y}$. In the SNN architecture there is no set of weights connecting the extracted layer to some terminal node, instead the extracted layer enters directly into the loss function (2.2) or (2.3). Thus, unlike in the NN Method, there is no need to weight the extracted layer before its input into the matching estimator. We propose matching on

\begin{equation}
X_{s} = [M_{d} \ \ M_{y}]_{n \times (z_y + z_d)}
\end{equation}

The rows of the matrix $X_{s}$ can be seen as representing the position of each observation $i$ on the learned manifold, where the mapping to that manifold has been explicitly constructed such that alike pairs are nearby each other in terms of Euclidean distance. This structure is naturally amenable to a standard matching procedure which assumes that observations that are nearby each other in terms of of Euclidean distance are alike. Indeed, this transformation of the raw covariate space renders that implicit assumption in the standard matching procedure far more plausible.

\section{Simulations}
\label{sec:simulations}


In order to benchmark the performance of the NN and SNN methods proposed here, we compare them to a number of other methods described in Table~\ref{methods}--including variable selection methods that choose a subset of the raw features---across three different simulated environments characterized by DGPs of increasing complexity.

\begin{table}[ht]
	\centering
	\caption{Matching methods used in simulations}\label{methods}
	\begin{tabular}{lcc}
		\hline
		Abbrev. & Type & Matching Features \\\hline
		NN & Pseudo-metric & Union of semi-final layers from NNs trained on $Y,D$ \\
		SNN & Pseudo-metric & Union of semi-final layers from SNNs trained on $Y,D$ \\
		$L_1$$^*$ & Var. Selection &Union of features selected by Lasso models on $Y,D$ \\
		RRF$^*$ & Var. Selection & Union of features selected by RRFs on $Y,D$\\
		PSM& Prop. Score & Propensity scores from logit regression of $D$ onto $X$\\
		PSMSQ& Prop. Score &Propensity scores from logit regression of $D$ onto $\{X, X^2\}$\\
		U. Oracle& Var. Selection & Union of features known to impact \textit{either} $Y$ \textit{or} $D$\\
		Int. Oracle& Var. Selection &Intersection of features known to impact \textit{both} $Y$ \textit{and} $D$\\\hline
		\end{tabular}
\end{table}

For all simulated environments, the following are true:

\begin{equation}
X_{i} \sim N(0_K, I_K); \ K = 50 \ N = 8000 ; \beta_{0} = 1
\end{equation}

Where $X_{i}$ denotes the vector of covariates for observation $i$, $K$ is the dimension of the covariate space, $N$ is the number of observations in a given simulation, and $\beta_0$ is the true treatment effect. We examine performance under three different DGPs: a sparse linear DGP, a sparse linear DGP with quadratic terms, and a NN DGP. Complete descriptions of the simulated environment can be found in Supplement A.

As additional benchmarks, we include two methods which generate a matching space through variable selection. First, in accordance with Section \ref{sec:estimators}, we estimate a pair of LASSO regressions to choose matching features. In the second alternative, we estimate a pair of Regularized Random Forests (RRF) \citep{Deng2012} to choose matching features on the basis of variable importance. Supplement B contains a more complete treatment of these alternatives.

Examining Table \ref{sparse_linear} it is evident that, when the DGP is relatively simple, both PSM and PSMSQ do quite well in RMSE terms. Importantly, the two neural-net-based methods we propose, NN and SNN, perform nearly identically to PSM and PSMSQ in this setting.

At a first pass, it may seem counter-intuitive that oracle estimators--endowed with perfect knowledge of the relevant covariates--will underperform supervised algorithms which seek to recover some form of this knowledge. However, the non-parametric bias induced by matching to a nearest neighbor is $O(n^{-1/K})$ for problems with $K$ continuous covariates \citep{abadie2006large}. This explains the poor performance of the oracle estimators and further motivates the pseudo-metric learning paradigm.

\begin{table}[ht]
\centering
\caption{Sparse linear DGP simulation results\label{sparse_linear}}
\begin{tabular}{rrrrrrrrrr}
  \hline
 & NN & SNN & $L_1$ & RRF & PSM & PSMSQ & U. Oracle & Int. Oracle \\ 
  \hline
Mean & 0.96 & 1.00 & 1.42 & 1.52 & 0.99 & 0.99 & 1.29 & 1.13 \\ 
  SD & 0.02 & 0.04 & 0.04 & 0.03 & 0.05 & 0.04 & 0.03 & 0.05 \\ 
  RMSE & 0.05 & 0.04 & 0.42 & 0.52 & 0.05 & 0.04 & 0.30 & 0.14 \\ 
   \hline
\end{tabular}

\end{table}


In Table \ref{sparse_linear_sq} we see that, as the DGP grows slightly non-linear, the performance of PSM and PSMSQ begins to deteriorate. It should be expected that PSM, now explicitly misspecified, should begin to falter. However, PSMSQ sees a non-trivial uptick in RMSE as well.

\begin{table}[ht]
\centering
\caption{Sparse linear with sq. terms DGP simulation results}\label{sparse_linear_sq}
\begin{tabular}{rrrrrrrrrr}
  \hline
 & NN & SNN & $L_1$ & RRF & PSM & PSMSQ & U. Oracle & Int. Oracle \\ 
  \hline
Mean & 0.99 & 1.04 & 1.81 & 1.92 & 1.19 & 1.01 & 1.66 & 1.34 \\ 
  SD & 0.05 & 0.07 & 0.16 & 0.17 & 0.26 & 0.13 & 0.13 & 0.11 \\ 
  RMSE & 0.05 & 0.08 & 0.83 & 0.94 & 0.32 & 0.13 & 0.67 & 0.36 \\ 
   \hline
\end{tabular}

\end{table}


The simulation results from the NN DGP in Table \ref{ann} indicate that, as the DGP grows increasingly complex, PSM and PSMSQ become wholly unusable. SNN and NN exhibit a very gentle uptick in RMSE, but are still nearly unbiased. If the researcher takes seriously the possibility that the DGP may be highly non-linear, then it is prudent to employ something akin to NN or SNN--it is harmless if the DGP turns out to be simple, but provides significant gains if is not.
\begin{table}[ht]
\centering
\caption{NN DGP simulation results}\label{ann}
\begin{tabular}{rrrrrrrr}
  \hline
 & NN & SNN & $L_1$ & RRF & PSM & PSMSQ \\ 
  \hline
Mean & 1.01 & 1.07 & 1.58 & 1.56 & 1.51 & 1.39 \\ 
  SD & 0.10 & 0.21 & 0.06 & 0.04 & 0.60 & 0.46 \\ 
  RMSE & 0.10 & 0.21 & 0.58 & 0.56 & 0.77 & 0.59 \\ 
   \hline
\end{tabular}

\end{table}

\section{Testing performance with experimental data} \label{sec:application}
\subsection{LaLonde (1986) dataset}

We first examine the performance of the methods proposed in Section \ref{sec:estimators} on the canonical \cite{lalonde1986evaluating} \href{http://users.nber.org/~rdehejia/data/nswdata2.html}{data set}. Following \cite{imbens2014}, we look specifically at the experimental and non-experimental versions of the original data set introduced in \cite{Dehejia1999}.

In the LaLonde data, applicants to National Supported Work (NSW), a labor market training program, were selected at random for participation. Using the this data alone, we can learn the experimental ATT. Combining these data with a non-experimental comparison group derived from the Panel Study of Income Dynamics (PSID), it is possible to benchmark the performance of various causal inference methods designed to recover the ATT from observational data.

The variable of interest is labor market earnings in 1978, and available features include earnings in 1974 and 1975 as well as various demographic variables. In Table \ref{tab:lalonde_results}, we compare non-experimental results with the experimentally estimated ATT of \$1,794. The NN and SNN estimates, standard errors, and confidence intervals are taken from the average over 100 iterations of each method. In the case of the non-experimental LaLonde data, the control and treated groups are significantly different in the observed features. For that reason, we find find it prudent to average over different iterations of the neural-net based methods in a situation with such extreme baseline control-treated feature imbalance.

As is standard practice in cases of significant class imbalance, we recommend oversampling in NN and SNN training if the treated units comprise less than $10\%$ of the entire sample. Other methods of protecting against overfit may also be used as well.

\begin{table}[ht]
\centering
\caption{LaLonde non-experimental ATT results} \label{tab:lalonde_results}
\begingroup\small
\begin{tabular}{rrrrr}
  \hline
 & Est.* & Difference & SE** & 95\% CI \\ 
  \hline
\textbf{Experimental} & \textbf{1794.34} &  \textbf{0.00} &  \textbf{671.00} & \textbf{(479, 3110)} \\ 
  NN & 1632.74 & -161.60 &  872.79 & (-78, 3343) \\ 
  SNN & 1736.51 &  -57.84 &  795.11 & (178, 3295) \\ 
  PSM &  897.94 & -896.40 & 1045.56 & (-1151, 2947) \\ 
  PSMSQ & 2109.21 &  314.87 &  990.90 & (167, 4051) \\ 
  OLS &  914.65 & -879.69 &  551.32 & (-166, 1995) \\ 
   \hline\multicolumn{5}{l}{*Estimates for NN and SNN methods are averaged over 100 runs each.} \\
  \multicolumn{5}{l}{**SE are calculated as in Abadie and Imbens (2006) for matching estimators.}
\end{tabular}

\endgroup
\end{table}

Standard errors are calculated according to \cite{abadie2006large} for all estimators besides OLS. Among all methods, SNN comes the closest to recovering the experimental treatment effect, with NN outperforming all non-SNN methods by a considerable margin as well. While OLS provides the lowest standard error, it is considerably biased.\footnote{We attempted to benchmark our results on this data set against those reported in \cite{li2017matching}, but were unable to recreate their exact environment.}

\subsection{IHDP (1993) dataset}

We additionally examine the performance of the methods proposed in Section \ref{sec:estimators} on a modified version of the IHDP (Infant Health and Development Program) \href{https://www.icpsr.umich.edu/icpsrweb/HMCA/studies/9795}{data set}. The IHDP data set originates from a randomized longitudinal trial designed to discover the effect of a comprehensive set of interventions designed to curb health and developmental issues in low birth weight, premature infants.

In order to simulate an observational study, we follow \cite{hill2011}, \cite{li2017matching}, and others in removing all children with non-white mothers from the treatment group. The subset of removed infants is clearly non-random and therefore will induce the type of structural feature imbalance that we typically expect in non-experimental data.

Because the theoretical framework we present maintains the unconfoundedness assumption, we simulate outcomes using only pretreatment features and treatment assignment. The response surfaces we use follow \cite{li2017matching} exactly. A full description can be found in Supplement C.

We simulate 50 response surfaces and report the results in Table \ref{tab:ihdp_results}. Both NN and SNN result in small bias and RMSE, providing further support for the efficacy of these pseudo-metric learning methods across a variety of settings. While \cite{li2017matching} only report the absolute value of the error in average treatment effect, they report an error of $0.16$, which is bested by both the NN and SNN methods. \newline

\begin{table}[ht]
\centering
\caption{IHDP simulated outcome results (ATT = 4)} \label{tab:ihdp_results}
\begingroup\small
\begin{tabular}{rrrr}
  \hline
 & Avg Est.* & Avg Difference* & RMSE**  \\ 
  \hline
NN & 3.89 & -0.11 & 0.28 \\ 
  SNN & 4.13 & 0.13 & 0.34 \\ 
  PSM & 3.74 & -0.26 & 1.03 \\ 
  PSMSQ & 3.92 & -0.08 & 0.46 \\ 
  OLS & 3.83 & -0.17 & 0.33 \\ 
   \hline\multicolumn{4}{l}{*Averages over 50 sets of simulated outcomes.} \\
  \multicolumn{4}{l}{**RMSE is calculated using results from 50 simulated outcomes.}
\end{tabular}

\endgroup
\end{table}

\section{Conclusion}
\label{sec:conclusion}
When estimating effects of conditionally exogenous treatment, pre-processing data with matching is a popular tool, and researchers with high-dimensional data may face a difficult decision about how best to define an appropriate metric. Following \cite{li2017matching}, we recast the problem of matching in a metric learning framework that maps our covariates to a low-dimensional space. This facilitates ``closer matches'' while still capturing important aspects of unit-level heterogeneity. 

We provide general conditions under which a learned distance metric is sure to lead to consistent estimation. This illuminates the importance of using control-only or treatment-only data when using supervised learning to discover a metric from outcome data. We also provide applied researchers with two new methods that leverage both MLPs and siamese neural nets and which compare favorably to state-of-the-art alternatives.


\small
\bibliographystyle{apalike} 

\nocite{abadie2016matching}
\newpage
\appendix

\part*{Supplements}

\section{Simulation settings}
Descriptions for each simulated environment described in Section 4 are provided below. Each DGP is simulated 1000 times.

\begin{itemize}

	\item[1.] A sparse linear DGP characterized by
	
	\begin{equation}
	Y_{i} = D_{i}\beta_{0} + X_{i} \boldsymbol{\gamma} + \epsilon_{i} ; \ \epsilon_{i} \sim N(0,1)
	\end{equation}
	
	\begin{equation}
	D_{i} = Bernoulli  \left( \frac{1}{1 + e^{-(X_{i} \boldsymbol{\omega})}} \right)
	\end{equation}

	Additionally define $L_{nz} = \{\omega_{k} \in \boldsymbol{\omega} \vert \omega_{k} \neq 0\}$ and $G_{nz}  = \{\gamma_{k} \in \boldsymbol{\gamma} \vert \gamma_{k} \neq 0\}$. Then,
	
	\begin{equation}
	\vert L_{nz} \vert = \vert G_{nz} \vert = 8
	\end{equation}
		
	\begin{equation}
	\omega_{i} = \gamma_{i} = 0.5 \ \forall \ \omega_{i} \in L_{nz}, \gamma_{i} \in G_{nz}
	\end{equation}

	Finally, there is overlap in the $x_{i}$ which have non-zero $\omega_{i}$ and non-zero $\gamma_{i}$. Specifically, six $x_{i}$ have non-zero values of both coefficients, and four have a non-zero value of just one of $\omega_{i}$ or $\gamma_{i}$.

	\item[2.] A sparse linear DGP with squared terms characterized by
	
	\begin{equation}
	Y_{i} = D_{i}\beta_{0} + X_{i} \boldsymbol{\gamma_{0}} + X_{i}^{2} \boldsymbol{\gamma_{1}} + \epsilon_{i} ; \ \epsilon_{i} \sim N(0,1)
	\end{equation}
	
	\begin{equation}
	D_{i} = Bernoulli \left( \frac{1}{1 + e^{-(X_{i} \boldsymbol{\omega_{0}} + X_{i}^{2} \boldsymbol{\omega_{1}})}} \right)
	\end{equation}

	Additionally define $L_{0_{nz}} = \{\omega_{0_{k}} \in \boldsymbol{\omega_{0}} \vert \omega_{0_{k}} \neq 0\}$, $G_{0_{nz}}= \{\gamma_{0_{k}} \in \boldsymbol{\gamma_{0}} \vert \gamma_{0_{k}} \neq 0\}$, $L_{1_{nz}} = \{\omega_{1_{k}} \in \boldsymbol{\omega_{1}} \vert \omega_{1_{k}} \neq 0\}$, and $G_{1_{nz}}= \{\gamma_{1_{k}} \in \boldsymbol{\gamma_{1}} \vert \gamma_{1_{k}} \neq 0\}$. Then,
	
	\begin{equation}
	\vert L_{0_{nz}}\vert = \vert G_{0_{nz}}\vert = 8;  \vert L_{1_{nz}}\vert = \vert G_{1_{nz}}\vert = 2
	\end{equation}
	
	\begin{equation}
	\omega_{i} = \gamma_{i} = 0.5 \ \forall \ \omega_{i} \in \{L_{0_{nz}}, L_{1_{nz}}\}, \gamma_{i} \in \{G_{0_{nz}}, G_{1_{nz}}\}
	\end{equation}
	
		There is the same level of overlap in the $x_{i}$ which have non-zero coefficient values of $\omega_{0_{i}}$ and non-zero $\gamma_{0_{i}}$ as in the linear sparse DGP. This need not be true of the squared terms, however.
		
		\item[3.] A random NN DGP is characterized by
		
		\begin{equation}
		Y_{i} \sim D_{i}\beta_{0} + F_{Y}(X_{i}) + \epsilon_{i} ; \ \epsilon_{i} \sim N(0,1)
		\end{equation}
		
		\begin{equation}
		D_{i} = Bernoulli \left(\frac{1}{1 + e^{-(F_{D}(X_{i}))}} \right)
		\end{equation}
		
		$F_{Y}(\cdot)$ and $F_{D}(\cdot)$ each have one very wide hidden layer and one shallow hidden layer. The first layer for each network utilizes an ELU activation function, while the second utilizes a ReLU activation function. Importantly, the ELU activation function is never used to learn the matching embedding. We deliberately use a different structure as well as the ELU activation functions in order to ensure that this DGP is not nested in the neural nets which will learn the matching embedding.
		
The network weights for both $F_{Y}(\cdot)$ and $F_{D}(\cdot)$, which can be seen as analogous to the coefficient values of first two DGPs above, are generated with some correlation in order to induce an upward bias of roughly the same magnitude (here, referring to the bias that results from simply running a linear regression of $y$ on treatment) as the simpler DGPs. Additionally, network weights are randomly set to $0$ with $p = 0.5$ in order to limit the complexity of the DGP.

\end{itemize}

\section{Description of variable selection methods for matching}

The below methods are an instantiation of the general procedure described in Section 3.1. Here, that framework is applied to variable selection algorithms where the learned representation of the data is a subset of the input features.

For both of the below methods, we recommend standard tuning of the penalty parameter through cross-validation.

\subsection{L1 Method}

First, we run a LASSO of $y$ on $X$ and find $\boldsymbol{\beta_{ly}}$:

\begin{equation}
\boldsymbol{\beta_{ly}} = \argminA_{\boldsymbol{\beta}} \sum_{i =1}^{n} (y_{i} - \mathbf{X_{i}' \boldsymbol{\beta}})^2 + \lambda \sum_{j = 1}^{p} \vert \beta_{j} \vert 
\end{equation}

Second, we run an L1-penalized logit of $d$ on $X$ and find $\boldsymbol{\beta_{ld}}$:

\begin{equation}
\boldsymbol{\beta_{ld}} = \argminA_{\boldsymbol{\beta}} -\log \(\prod_{i:d_{i} = 1} p(\mathbf{X_{i}})   \prod_{j:d_{j} = 0} (1 - p(\mathbf{X_{j}}))\)  + \lambda \sum_{k = 1}^{p} \vert \beta_{k} \vert 
\end{equation}

We then collect $\mathbf{X_{select}} = \{X_{i} \vert \beta_{ly_{i}} \neq 0 \} \cup \{X_{j} \vert \beta_{ld_{j}} \neq 0 \}$.

In the simulations we present, we select the penalty parameter $\lambda$ through k-fold cross-validation separately for each regression. However, it is not obvious that selecting the optimal $\lambda$ which arises from solving the outcome and treatment prediction problems will coincide with the optimal $\lambda$ for the downstream matching estimator. Specifically, the cost of selecting irrelevant variables may be much larger for a matching estimator due to the order of the non-parametric bias. In fact, our simulation results on a true sparse DGP suggest that, when the goal is to input selected variables into a matching estimator, a more aggressive penalty may be necessary than that which is selected through cross-validation on the prediction problems.

\subsection{RRF Method}

The regularization framework for tree-based methods amounts to comparing the (penalized) gain for a split on a new variable to the maximum gain possible from splitting on a variable that has already been used for splitting. Formally, at any given node, define the set of features used in previous splits as $F$. For some feature $X_{j} \notin F$, we would like to ensure that it is selected only if $gain(X_{j}) \gg \max_{ \{i \vert X_{i} \in F \}} gain(X_{i})$.

We can accomplish this through applying a penalty $\lambda \in \[0,1\]$ to $gain(X_{j})$ for all $X_{j} \notin F$. Define $gain_{R}(X_j)$ as:

\begin{equation}
gain_{R}(X_j) = \begin{cases} 
      \lambda \cdot gain(X_{j}) & X_{j} \notin F \\
      gain(X_{j}) & X_{j} \in F
   \end{cases}
\end{equation}

An RRF model is then trained on both the outcome and treatment prediction problems, and we collect $\mathbf{X_{select}} = \{X_{i} \vert X_{i} \in F_{y} \} \cup \{X_{j} \vert X_{j} \in F_{d} \}$, where $F_{y}$ is the set of features selected by the outcome model and $F_{d}$ is the set of features selected by the treatment model.

Similarly to the L1 methods, we select the regularization parameter $\lambda$ through k-fold cross validation. The simulation results suggest that, as is the case with the L1 methods, this process may select a too-conservative penalty parameter--in this case corresponding to a $\lambda$ which is \textit{larger} that what would be optimal if the non-parametric nature of the matching estimator were fully internalized when selecting $\lambda$.

In fact, in our simulations that boast a true sparse DGP, nearly every available covariate ends up in $\mathbf{X_{select}}$. However, we use variable importance, $c_{j}$, as measured by average informational gain, to scale each $X_{j} \in \mathbf{X_{select}}$, so that matching is performed on the space defined by the set $\mathbf{X_{scale}} = \{c_{j} \cdot X_{j} \vert X_{j} \in \mathbf{X_{select}} \}$. Scaling selected features according to their importance has the effect of greatly reducing their contribution to the distance metric employed in the matching step, and therefore mitigating the non-parametric bias induced by their inclusion. However, a more elegant solution should involve a modified algorithm for selecting $\lambda$, as well as careful tuning of other parameters which can affect the magnitude of $\mathbf{X_{select}}$, such as the number of trees, the number of splitting candidates to be evaluated at each node, and the depth of each tree.

\section{Description of IHDP settings}

The response surfaces we use follow Li and Fu (2017), specifically:

\begin{itemize}

\item[i.] $Y(0) = exp((X+W) \beta) + \epsilon_{0}$
\item[ii.] $Y(1) = X \beta - \alpha + \epsilon_{1}$
\item[iii.] The factual and counterfactual outcomes are assigned as in the standard Rubin causal model framework.

\end{itemize}

Where $W$ is an offset matrix with each element equal to $0.5$ and $\beta \in \mathbf{R^{d \times 1}}$ is a vector of coefficients with each element sampled randomly from $(0, 0.1, 0.2, 0.3, 0.4)$ with respective probabilities $(0.6, 0.1, 0.1, 0.1, 0.1)$. The elements of the vectors $\epsilon_{0}$ and $\epsilon_{1}$ are randomly sampled from the standard normal distribution $N(0,1)$, and $\alpha \in \mathbf{R^{n \times 1}}$ is a constant vector with the value of the constant chosen such that the ATT will be equal to 4. We simulate 50 such response surfaces and report the results in Section 5. Both NN and SNN result in small bias and RMSE, providing further support for the efficacy of these feature learning methods across a variety of settings. \newline

\section{Proof of Theorem 2.1}

\begin{theorem} [Consistency]
	\label{thm:consistency}
	Suppose a metric $d$ on the space of $\mathbb X$. Additionally,
	\begin{enumerate}[label=\alph*.]
		\item suppose $\exists \ C >0$ such that for every $x,y \in \mathbb{X}$ \textit{either}
		\begin{itemize}
			\item $PSM$: $d(x,y) \geq C \cdot |\rho(x) - \rho(y)|$
			\item $PGM_C$: $d(x,y) \geq C \cdot |m(x) - m(y)|$
		\end{itemize}
		then $\hat{\ATT}_{d}$ is an asymptotically unbiased estimator of $\ATT$.
		\item suppose $\exists \ C >0$ such that for every $x,y \in \mathbb{X}$ \textit{either}
		\begin{itemize}
			\item $PSM$: $d(x,y) \geq C \cdot |\rho(x) - \rho(y)|$
			\item $PGM_T$: $d(x,y) \geq C \cdot |(m(x) + \tau(x)) - (m(y) + \tau(y))|$
		\end{itemize}
		then $\hat{\ATUT}_{d}$ is an asymptotically unbiased estimator estimator of $\ATUT$.
	\end{enumerate}
	And consistency of either estimator would follow from application of a law of large numbers to the residuals.
\end{theorem}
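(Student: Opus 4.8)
The plan is to reduce the claim to a statement about the matching bias and then transfer that bias through the postulated lower bounds on $d$. For the ATT, I would first write the standard error decomposition. Since each treated $i$ contributes $Y_i = Y_i(1)$ and its match contributes $Y_{j^*_d(i)} = Y_{j^*_d(i)}(0)$, writing $Y_i(1) = m(X_i) + \tau(X_i) + \eps_i$ and $Y_{j^*_d(i)}(0) = m(X_{j^*_d(i)}) + \eps_{j^*_d(i)}$ with conditionally mean-zero residuals gives
\begin{align*}
\hat{\ATT}_d - \ATT = \underbrace{\Tavg\big[m(X_i) - m(X_{j^*_d(i)})\big]}_{B_N} + \underbrace{\Tavg\big[\eps_i - \eps_{j^*_d(i)}\big]}_{R_N} + \underbrace{\Big(\Tavg\tau(X_i) - \ATT\Big)}_{L_N}.
\end{align*}
The term $L_N$ vanishes by the law of large numbers (the treated-sample average of $\tau(X)$ converges to $E[\tau(X)\mid T=1]=\ATT$), and $R_N$ is exactly the ``residual'' that the statement defers to a LLN. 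Hence the content of asymptotic unbiasedness is precisely $E[B_N]\to 0$.

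Second, I would establish a matching-discrepancy lemma: $d(X_i, X_{j^*_d(i)})\to 0$ for (almost) every matched pair as $N\to\infty$. This is where Assumptions 1--2 do the work: compact convex support with density bounded away from zero, together with common support $\rho\in(\eta,1-\eta)$, forces control units to accumulate densely throughout $\mathbb{X}$, so that for each treated $i$ there is, with probability tending to one, a control whose covariate vector is arbitrarily Euclidean-close to $X_i$; continuity of $d$ then yields $\min_{k:T_k=0} d(X_i,X_k)\to 0$. I would flag that this step needs $d$ continuous with respect to the ambient topology (which holds for the neural-net and score-based metrics of interest but is not implied by the lower bound alone). I expect this lemma to be the main obstacle, since it is the only place where the geometry of the data enters and it must be made uniform enough, via boundedness of $m$ on the compact support, to pass to the expectation of $B_N$.

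Third, I would split on which lower bound $d$ satisfies. Under $PGM_C$ the transfer is immediate: $|m(X_i)-m(X_{j^*_d(i)})|\le C^{-1}d(X_i,X_{j^*_d(i)})\to 0$ pointwise, so dominated convergence gives $E[B_N]\to 0$. Under $PSM$ the bound yields only $|\rho(X_i)-\rho(X_{j^*_d(i)})|\to 0$, which does \emph{not} force the prognostic gap to vanish pointwise; here I would invoke the balancing-score property. Writing $\mu(p)=E[Y(0)\mid\rho(X)=p]$, unconfoundedness and the fact that $\rho$ is a balancing score give $\mu(p)=E[Y(0)\mid\rho(X)=p,T=t]$ for $t\in\{0,1\}$. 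Then $E[m(X_{j^*_d(i)})\mid X_i]\to\mu(\rho(X_i))$ (using continuity of $\mu$ and $\rho(X_{j^*_d(i)})\to\rho(X_i)$), while $E[m(X_i)\mid\rho(X_i)=p,T_i=1]=\mu(p)$, so $E[B_N]\to E[m(X)\mid T=1]-E[\mu(\rho(X))\mid T=1]=0$, the last equality holding because both sides equal $E[Y(0)\mid T=1]$ by the tower property. Thus in the $PSM$ case the bias vanishes only in the averaged sense even though it persists pointwise.

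Finally, part (b) follows by the symmetric argument with treatment and control exchanged: for the ATUT one imputes $Y(1)$ for control units, so the relevant prognostic score is $E[Y(1)\mid X]=m+\tau$, which is exactly the object bounded in $PGM_T$, while the $PSM$ branch is identical by symmetry of the balancing-score argument. Consistency of both estimators then follows, as stated, once a law of large numbers is applied to $R_N$, accounting for the dependence induced by reused matches.
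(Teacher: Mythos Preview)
Your proposal is correct and follows essentially the same route as the paper's proof: decompose the estimator into a matching-bias term plus residuals, argue that the $d$-distance between matched pairs vanishes because control units are dense in $\mathbb{X}$, and then transfer this through the assumed lower bounds to conclude that the prognostic (or propensity) gap shrinks. Your treatment is in fact more explicit than the paper's in two respects: you spell out the balancing-score argument for the $PSM$ branch (the paper simply defers to ``standard arguments'' and cites Abadie and Imbens (2016)), and you correctly flag that passing from Euclidean-dense controls to $d(X_i,X_{j^*_d(i)})\to 0$ requires continuity of $d$ in the ambient topology, a regularity the paper's proof leaves implicit.
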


\begin{proof}
First note that by combining Assumptions 1-3, we know the density on $\mathbb X$ is everywhere positive for both treated and control units. Then for the case of part (a), there must exist a sequence $\delta_n \to 0$ such that either 
\begin{align*}
\textrm{$PGM_T$: } |m(x) - m(x_{j^*_{d}(i)})| < \delta_n
\intertext{or}
\textrm{$PSM$: } |\rho(x) - \rho(x_{j^*_{d}(i)})| < \delta_n.
\end{align*}
Now, write the true ATT and corresponding estimator as
\begin{align*}
ATT &= \Tavg \left[Y_i - Y_i(0)\right]\\
\hat \ATT_{d} &= \Tavg \left[Y_i - Y_{j^*_{d}(i)}\right]\\
\intertext{and the expectation of difference}
E[\hat \ATT_{d} - ATT] &= E\left(\Tavg \left[Y_{j^*_{d}(i)} - Y_i(0)\right]\right)\\
&=\Tavg \left[m(X_i)-  m(X_{j^*_d(i)})\right]
\end{align*}

is satisfied. In the $PGM_T$ case, it is immediate that our bias is bounded by
	\begin{align*}
	E\left|\Tavg \left[m(X_i)-  m(X_{j^*_d(i)})\right] \right|&\leq \delta_n \to 0,
	\end{align*}
whereas in the $PSM$ case the bias is controlled by standard arguments about the estimated propensity score (see for example (Abadie, 2016)).

For (b), the proof has the same structure but in the $PGM_C$ case depends on our ability to provide matches with similar counterfactual \textit{treated} outcomes. That is,
\begin{align*}
E[\hat \ATUT_{d} - ATUT] &= E\left(\Cavg \left[Y_{j^*_{d}(i)} - Y_i(0)\right]\right)\\
&=\Cavg \left[(m(X_i) + \tau(X_i)) -  (m(X_{j^*_d(i)}) + \tau(X_{j^*_d(i)}))\right],
\end{align*}
which can be bounded so long as $|(m(X_i) + \tau(X_i)) -  (m(X_{j^*_d(i)}) + \tau(X_{j^*_d(i)}))| < \delta_n.$

This difference between the treated and control case illuminates the importance of learning a metric for ATT (ATUT) estimation on only outcome labels from control (treated) data.

\end{proof}

\section{Computing infrastructure}

All simulations and empirical demonstrations were executed in R. We use R wrappers for \textit{keras} and \textit{tensorflow} to build all neural networks, and the \textit{glmnet} and \textit{RRF} packages to estimate the variable selection methods described in Supplement B.

We have run the code on a Linux HPC as well as locally on Windows machines. The least powerful machine on which we've run the code is an 8-core, 2.90GHz Windows desktop with an Intel Core i7-7700T processor and 16GB of RAM.

\end{document}